\newtheorem{thm}{Theorem}
\newtheorem{cor}{Corollary}
\newtheorem{prop}{Proposition}
\newtheorem{problem}{Problem}
\newtheorem{example}{Example}
\newcommand{\nt}{{\mathbb N}}
\title{\LARGE \bf
Verification of Uncertain POMDPs Using Barrier Certificates
}
\author{Mohamadreza Ahmadi, Murat Cubuktepe, Nils Jansen, and Ufuk Topcu
\thanks{M. Ahmadi, M. Cubuktepe, and U. Topcu are with the Department of Aerospace Engineering and Engineering Mechanics, and the Institute for Computational Engineering and Sciences (ICES), University of Texas, Austin, 201 E 24th St, Austin, TX 78712. N. Jansen is with the Radboud University Nijmegen, The Netherlands. e-mails: (\{mrahmadi, mcubuktepe, utopcu\}@utexas.edu), n.jansen@science.ru.nl. 
}
}
\begin{document}

\maketitle
\thispagestyle{empty}
\pagestyle{empty}

\begin{abstract}

We consider a class of partially observable Markov decision processes (POMDPs) with uncertain transition and/or observation probabilities.
The uncertainty takes the form of probability intervals.
Such uncertain POMDPs can be used, for example, to model autonomous agents with sensors with limited accuracy, or undergoing a sudden component failure, or structural damage~\cite{MPSS10}.  
Given an uncertain POMDP representation of the autonomous agent, our goal is to propose a method for checking whether the system will satisfy an optimal performance, while not violating a safety requirement (e.g. fuel level,  velocity, and etc.).
To this end, we cast the POMDP problem into a switched system scenario. 
We then take advantage of this switched system characterization and propose a method based on barrier certificates for optimality and/or safety verification.  
We then show that the verification task can be carried out computationally by sum-of-squares programming. 
We illustrate the efficacy of our method by applying it to a Mars rover exploration example.
\end{abstract}

\section{INTRODUCTION}
A popular formal model for planning subject to stochastic behavior are Markov decision processes (MDPs)~\cite{Put94}, where an agent chooses to perform an action under full knowledge of the environment it is operating in.
The outcome of the action is a probability distribution over the system states.
Many applications, however, allow only \emph{partial observability} of the current system state~\cite{kaelbling1998planning,thrun2005probabilistic,WongpiromsarnF12}.
Partially observable Markov decision processes (POMDPs) extend MDPs to account for such partial information~\cite{Russell-AI-Modern}.
Upon certain \emph{observations}, the agent infers the likelihood of the system being in a certain state, called the belief state. 
The belief state together with an update function form a (typically uncountably infinite) MDP, referred to as the \emph{belief MDP}~\cite{ShaniPK13,MadaniHC99,braziunas2003pomdp,szer2005optimal,NPZ17}.

Most formulations assume the the transition probability function and the observation function for MDPs and POMDPs are explicitly given.
Unforeseeable events such as (unpredictable) structural damage to a system~\cite{796wwe3643} or an imprecise sensor model~\cite{bagnell2001solving}, however, necessitate a more robust formalism. 
So-called uncertain MDPs incorporate \emph{uncertainty-sets} of probabilities, for instance, for models that are empirically determined.
Similar extensions exist for uncertain POMDPs~\cite{burns2007sampling,itoh2007,bry2011rapidly}.

Here, we aim to address current challenges for the area of artificial intelligence, referred to as \emph{robust decision-making} and \emph{safe exploration}~\cite{russel_priorities,amodei2016concrete,stoica2017berkeley,freedman2016safety}.
Concretely, the problem is to provide a \emph{policy} for an (autonomous) agent that ensures certain desired behavior by \emph{robustly} accounting for any uncertainty and partial observability that may occur in the system~\cite{howard1960dynamic}.
The policy should be \emph{optimal} with respect to some performance measure and additionally ensure \emph{safe} navigation through the environment. 

However, already for mere POMDPs (without uncertainties in the probabilities), such policies are computed by assessing the entire belief MDP, rendering the problem undecidable~\cite{ChatterjeeCT16}. Several promising approximate point-based methods via finite abstraction of the belief space are proposed in the literature~\cite{Hauskrecht2000,Spaan2005,6284837}. Nonetheless, these techniques do not provide a guarantee for safety or optimality. That is, it is not clear whether the probability of satisfying the safety/optimality requirement is an upper-bound or a lower-bound for a given POMDP. Establishing guaranteed performance is of fundamental importance in safety-critical applications, e.g. aircraft collision avoidance~\cite{Kochenderfer2013} and Mars rovers~\cite{smith2004heuristic}.  

In this paper, we borrow a notion from control theory to provide guarantees for optimality and safety of uncertain POMDPs, without the need for finite abstraction. We first demonstrate that POMDP analysis problems can be represented as analyzing the solutions of a special discrete-time switched systems~\cite{liberzon2003switching}. In particular, POMDPs with uncertain transition and/or observation probabilities belonging to intervals can be characterized as a class of  switched systems with parametric uncertainty. Based on this switched system representation, we verify the safety and/or optimality requirements of a given POMDP using barrier certificates~\cite{4287147} (see our preliminary results with application to privacy verification of POMDPs~\cite{ahmadi2018privacy}). We show that if there exist a barrier certificate satisfying a set of inequalities along the belief update equation of the POMDP, the safety/optimality property is  guaranteed to hold. These conditions can be computationally implemented as a set of sum-of-squares programs (see Appendix~\ref{app:SOS} for a brief introduction). We elucidate the proposed method by applying it to  an uncertain POMDP model for a Mars rover with uncertain sensor accuracy sampling rocks on the Mars terrain. 

The rest of the paper is organized as follows. In the subsequent section, we define the notations used in this paper and review some preliminary definitions. In Section~\ref{sec:probform}, we describe the class of uncertainties and the properties that we are interested in. In Section~\ref{sec:main}, we propose a switched system representation for POMDPs, and present conditions based on barrier certificates for checking optimality and/or safety.  In Section~\ref{sec:example}, we apply the proposed method to verify the performance of an uncertain POMDP model of a Mars rover sampling rocks. Finally, in Section~\ref{sec:conclusions}, we conclude the paper and give directions for future research.



\section{Preliminaries}

\textbf{Notation:}
The notations employed in this paper are relatively straightforward. $\mathbb{R}_{\ge 0}$ denotes the set $[0,\infty)$.  $\mathbb{Z}$ denotes the set of integers and $\mathbb{Z}_{\ge c}$ for $c \in \mathbb{Z}$ implies the set $\{c,c+1,c+2,\ldots\}$. $\mathcal{R}[x]$ accounts for the set of polynomial functions with real coefficients in $x \in \mathbb{R}^n$, $p: \mathbb{R}^n \to \mathbb{R}$ and $\Sigma \subset\mathcal{R}$ is the subset of polynomials with a sum-of-squares decomposition; i.e, $p \in  \Sigma[x]$ if and only if there are $p_i \in \mathcal{R}[x],~i \in \{1, \ldots ,k\}$ such that $p = p_i^2 + \cdots +p_k^2$.  For a finite set $A$, $|A|$ denotes the number of elements in $A$.


\subsection{Partially Observable Markov Decision Processes}
Markov decision processes (MDPs)~\cite{Put94} are  decision-making modeling framework, in which the actions have stochastic outcomes. An MDP $\mathcal{M}=(Q,p_0,A,T)$ has the following components:
	\begin{itemize}
		\item $Q$ is a finite set of states with indices $\{1,2,\ldots,n\}$.
		\item $p_0:Q\rightarrow[0,1]$ defines the distribution of the initial states, i.e., $p_0(q)$ denotes the probability of starting at $q\in Q$.
		\item $A$ is a finite set of actions.
		\item $T:Q\times A\times Q\rightarrow [0,1]$ is the probabilistic transition function, where 
		\begin{multline}
		T(q,a,q'):=P(q_t=q'|q_{t-1}=q,a_{t-1}=a),~\\
		\forall t\in\mathbb{Z}_{\ge 1}, q,q'\in Q, a\in A. \nonumber
		\end{multline}		 
	\end{itemize}

POMDPs provide a more general mathematical framework to consider not only the stochastic outcomes of actions, but also the imperfect state observations~\cite{Sondik78}. Formally,  a POMDP $\mathcal{P}=(Q,p_0,A,T,Z,O)$ is defined with the following components:

	\begin{itemize}
		\item $Q,p_0,A,T$ are the same as the definition of an MDP.
		\item $Z$ is the set of all possible observations representing outputs of a discrete sensor. Usually $z\in Z$ is an incomplete projection of the world state $q$, contaminated by sensor noise.
		\item $O:Q\times A \times Z\rightarrow [0,1]$ is the observation probability transition function (sensor model), where
		\begin{multline}
		O(q,a,z):=P(z_t=z|q_{t}=q,a_{t-1}=a),~\\
	    \forall t\in\mathbb{Z}_{\ge 1}, q\in Q, a\in A, z\in Z. \nonumber
		\end{multline}			 
	\end{itemize}
	
	Since the states are not directly accessible in a POMDP, decision making requires the history of observations. Therefore, we need to define the notion of a \emph{belief} or the posterior as sufficient statistics for the history~\cite{astrom}. Given a POMDP, the belief at $t=0$ is defined as $b_0(q)=p_0(q)$ and $b_t(q)$ denotes the probability of system being in state $q$ at time $t$. At time $t+1$, when action $a\in A$ is observed, the belief update can be obtained by a Bayesian filter as
\begin{align} \label{equation:belief update}
b_t(q')
&=P(q'|z_t,a_{t-1},b_{t-1}) \nonumber \\
&= \frac{P(z_t|q',a_{t-1},b_{t-1})P(q'|a_{t-1},b_{t-1})}{P(z_t|a_{t-1},b_{t-1})}\nonumber \\
&= \frac{P(z_t|q',a_{t-1},b_{t-1})}{P(z_t|a_{t-1},b_{t-1})} \nonumber \\
&    \times \sum_{q\in Q}P(q'|a_{t-1},b_{t-1},q)P(q|a_{t-1},b_{t-1}) \nonumber \\
&=\frac{O(q',a_{t-1},z_{t})\sum_{q\in Q}T(q,a_{t-1},q')b_{t-1}(q)}{\sum_{q'\in Q}O(q',a_{t-1},z_{t})\sum_{q\in Q}T(q,a_{t-1},q')b_{t-1}(q)},
\end{align}
where the beliefs belong to the belief unit simplex
$$
\mathcal{B} = \left\{ b \in [0,1]^{|Q|} \mid \sum_q b_t(q)=1, \forall t  \right\}.
$$

A policy in a POMDP setting is then a mapping $\pi:\mathcal{B} \to A$, i.e., a mapping from the continuous beliefs space into the discrete and finite action space.

\section{Problem Formulation}\label{sec:probform}

We represent the uncertainty in the autonomous agent's dynamics as a POMDP with uncertain transition and/or observation probabilities. The class of uncertainties we study belong to an interval~\cite{ITOH2007453}. Let $T_u$ denote the set of triplets $(q,a,q')$  corresponding to the uncertain transition probabilities. Similarly, let $O_u$ denote the set of triplets $(q,a,z)$ corresponding to the uncertain observation probabilities. We consider the class of POMDPs with the following interval transition and/or observation probabilities 
\begin{subequations}\label{eq:uncertaintpdf}
		\begin{equation} 
		T(q,a,q') \in [\underline{l}_{q,a,q'},\overline{l}_{q,a,q'}],~ (q,a,q') \in T_u,
		\end{equation}	
		\begin{equation} 
		O(q,a,z) \in [\underline{o}_{q,a,z},\overline{o}_{q,a,z}],~(q,a,z) \in O_u,
		\end{equation}	
		\end{subequations}
where the constants $0\le\underline{l}_{q,a,q'}\le \overline{l}_{q,a,q'}\le1$ for all $(q,a,q') \in T_u$ and $0\le\underline{o}_{q,a,z}\le \overline{o}_{q,a,z}\le1$ for all $(q,a,z) \in O_u$.

In the sequel, we focus on the case of uncertain transition probabilities, but the extension to the case of uncertain observation transition  probabilities is straightforward and follows the same lines.




\subsection{Safety and Optimality}

For typical POMDP problems, we are often interested in assessing both optimal and safe behavior.
In the following, we define the formal notions of optimality and safety we consider here.
  
We define \emph{safety} as the probability of reaching a set of unsafe states $Q_u \subset Q$ being less than a given constant. 
To this end, we use the belief states.
Formally, we are interested in solving the following problem.
\begin{problem}
Given an uncertain POMDP with interval probabilities as described in~\eqref{eq:uncertaintpdf}, a point future in time $t^*$, a set of unsafe states $Q_u$, and a safety requirement constant $\lambda$, check whether
\begin{equation}\label{eq:safety}
g \left(b_{t^*}(q) \right) \le \lambda,~~q \in Q_u,
\end{equation}
where $g:\mathcal{B} \to \mathbb{R}$. In particular, $g$ can be an affine function.
\end{problem}
\smallskip
%

In addition to safety, we are interested in checking whether an \emph{optimality} criterion is satisfied.  
\begin{problem}
Given an uncertain POMDP with interval probabilities as described in~\eqref{eq:uncertaintpdf}, the reward function $R:Q \times A \to \mathbb{R}$, in which $R(q,a)$ denotes the reward of taking action $a$ while being at state $q$, a point future in time $t^*$, and a optimality requirement $\gamma$, check whether
\begin{equation}\label{eq:optimality}
\sum_{s=0}^{t^*} r(b_s,a_s) \le \gamma,
\end{equation}
where $r(b_s,a_s) = \sum_{q \in Q} b_t R(q,a_t)$.
\end{problem}
\smallskip
%


\section{Main Results}\label{sec:main}

Checking whether~\eqref{eq:safety} and~\eqref{eq:optimality} hold by solving the POMDP directly is a PSPACE-hard problem~\cite{ChatterjeeCT16}, not to mention the difficulties arising from uncertain transition probabilities.  In this section, we first demonstrate that POMDPs can be represented as discrete-time switched systems. Then, we borrow a notion from control theory to check the safety and/or optimality requirements of a given POMDP with a guarantee or a certificate. 

\subsection{Treating POMDPs as Switched Systems}

The belief update equation~\eqref{equation:belief update}  is a discrete-time switched system, where the actions $a \in A$ define the switching modes.  Formally, the belief \emph{dynamics}~\eqref{equation:belief update} can be described as
\begin{equation}\label{equation:discretesystem1}
b_t = f_a\left(b_{t-1},z_t\right),
\end{equation}
where $b$ denote the belief vector belonging to the belief unit simplex $\mathcal{B}$ and $b_0 \in \mathcal{B}_0 \subset \mathcal{B}$ representing the set of initial beliefs (prior). In~\eqref{equation:discretesystem}, $a \in A$ denote the actions that can be interpreted as the switching modes, $z \in Z$ are the observations representing inputs, and $t \in \mathbb{Z}_{\ge 1}$ denote the discrete time instances. The vector fields $\{f_{a}\}_{a \in A}$ with $f_a: [0,1]^{|Q|} \to [0,1]^{|Q|} $ are described as the vectors with rows
$$
f_a^{q'}(b,\cdot,z) = \frac{O(q',a,z)\sum_{q\in Q}T(q,a,q')b_{t-1}(q)}{\sum_{q'\in Q}O(q',a,z)\sum_{q\in Q},T(q,a,q')b_{t-1}(q)},
$$
where $f_a^{q'}$ denotes the $q'$th row of $f_a$.  If the transition probabilities are uncertain, i.e., they belong to some given set, the system can be represented as an uncertain discrete-time switched system
\begin{equation}\label{equation:discretesystem}
b_t = f_a\left(b_{t-1},\theta,z_t\right),
\end{equation}
where $\theta \in \Theta$ is a set of uncertain parameters and $\Theta$ represents the uncertain transition probability intervals \eqref{eq:uncertaintpdf}. That is, 
\begin{equation*}
		\theta_{q,a,q'} = T(q,a,q') \in [\underline{l}_{q,a,q'},\overline{l}_{q,a,q'}],~ (q,a,q') \in T_u,
\end{equation*}
and
\begin{equation}\label{eq:uncertainparams}
\Theta =\left \{ \theta \mid \theta_{q,a,q'}  \in  [\underline{l}_{q,a,q'},\overline{l}_{q,a,q'}],~ (q,a,q') \in T_u \right \}.
\end{equation}

In this study, we consider two classes of problems in POMDP verification:
\begin{itemize}
\item [1.] No policy is given: This case corresponds to analyzing~\eqref{equation:discretesystem} under \emph{arbitrary switching} with switching modes given by $a \in A$.
\item [2.] A policy is given: This corresponds to to analyzing~\eqref{equation:discretesystem} under \emph{state-dependent switching}. Indeed, the policy $\pi:\mathcal{B} \to A$ determines regions in the belief space where each mode (action) is active.
\end{itemize}

Both cases of switched systems with~\emph{arbitrary switching} and~\emph{state-dependent switching} are well-known in the systems and controls literature~\cite{liberzon2003switching}. The next example illustrates the proposed  switched system representation  for POMDPs with a given policy.

\begin{example}Consider a POMDP with two states $\{q_1,q_2\}$, two actions $\{a_1,a_2\}$, and $z \in {Z}$. The policy
\begin{equation} \label{eq:example-sds}
\pi=
\left\{\begin{array}{lr}
        a_1, & b \in \mathcal{B}_1,\\
        a_2, & b \in \mathcal{B}_2\\
        \end{array}\right.
\end{equation}
leads to different switching modes based on whether the states belong to the regions $\mathcal{B}_1$ or $\mathcal{B}_2$ (see Figure~\ref{figure1}). That is, the belief update equation~\eqref{equation:discretesystem} is given by 
\begin{equation} \label{eq:example-dynamics}
b_t=
\left\{\begin{array}{lr}
        f_{a_1}\left(b_{t-1},z_t\right), & b \in \mathcal{B}_1,\\
        f_{a_2}\left(b_{t-1},z_t\right), & b \in \mathcal{B}_2.\\
        \end{array}\right.
\end{equation}
Note that the belief space is given by $\mathcal{B}=\mathcal{B}_1 \cup \mathcal{B}_2 = \{ b \mid b(q_1)+b(q_2)=1\}$.
\end{example}
\begin{figure}[tbp] 
\begin{center} 
\includegraphics[width=5cm]{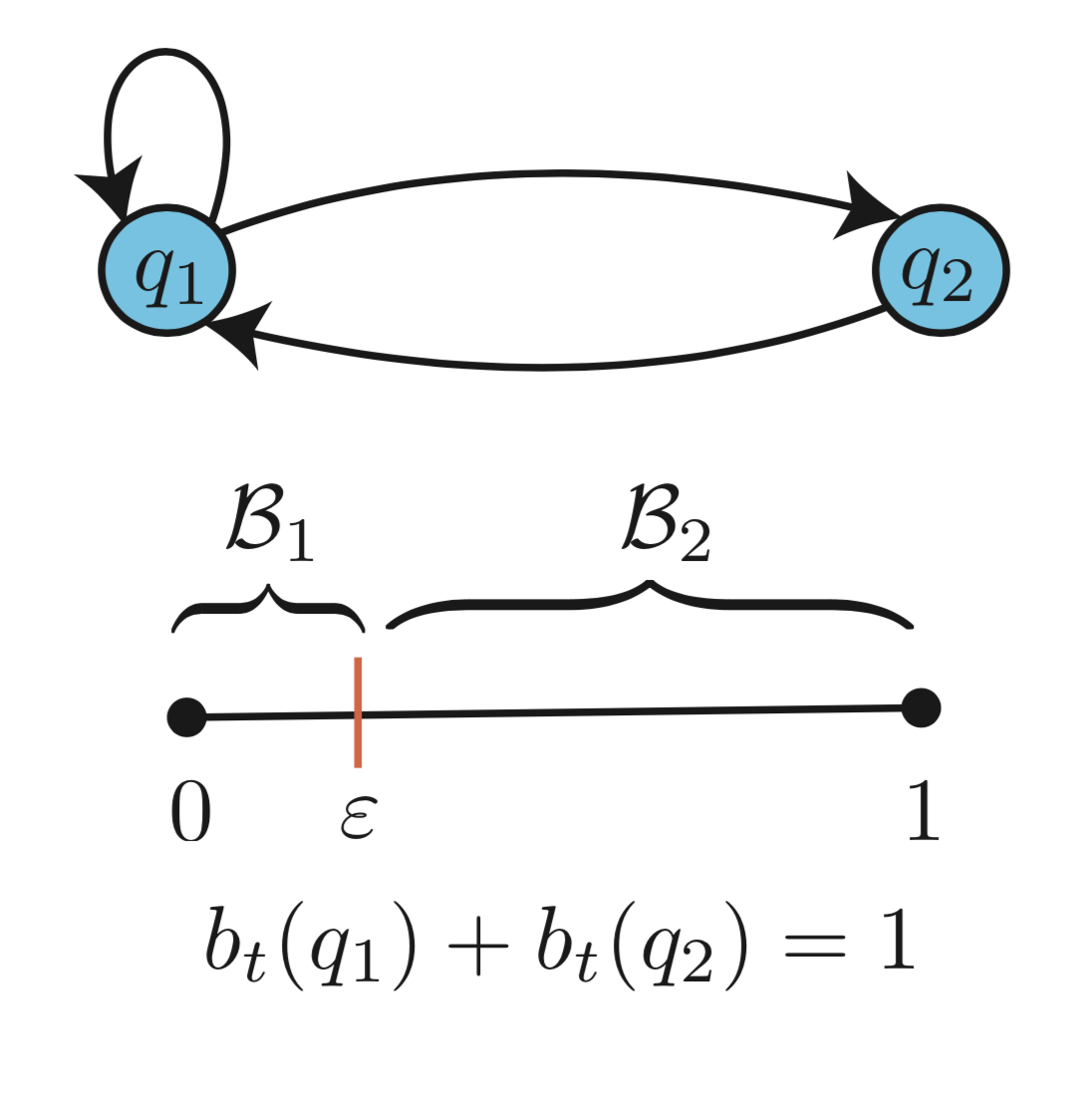}
\caption{An example of a POMDP with two states and the state-dependent switching modes induced by the policy~\eqref{eq:example-sds}. }
\label{figure1}
\end{center}
\end{figure}

\subsection{Verification Using Barrier Certificates}

In the following, we show how we can use barrier certificates to verify properties of the switched systems induced by POMDPS. 

Let us define the following unsafe set
\begin{equation}\label{eq:unsafeset}
\mathcal{B}_u^s = \{ b \in \mathcal{B} \mid g\left( b_{t^*}(q) \right) > \lambda,~~q \in Q_u \},
\end{equation}
which is the complement of~\eqref{eq:safety}.

\begin{thm}\label{theorem-barrier-discrete}
Consider the belief update equation~\eqref{equation:discretesystem} and the uncertain transition probabilities~\eqref{eq:uncertainparams}. Given a set of initial beliefs $\mathcal{B}_0 \subset [0,1]^{|Q|}$, an unsafe set $\mathcal{B}^s_u$ as given in~\eqref{eq:unsafeset} ($\mathcal{B}_0 \cap \mathcal{B}^s_u = \emptyset$), and a constant $t^*$, if there exists a function $B:\mathbb{Z} \times \mathcal{B} \to \mathbb{R}$ called the barrier certificate such that
\begin{equation}\label{equation:barrier-condition1}
B(t^*,b_{t^*})  > 0, \quad \forall b_{t^*} \in \mathcal{B}^s_u,
\end{equation}
\begin{equation}\label{equation:barrier-condition11}
 B(0,b_0) < 0, \quad \forall b_0 \in \mathcal{B}_0,
\end{equation}
and
\begin{multline}\label{equation:barrier-condition2}
B\left(t,f_a(b_{t-1},\theta,z)\right) - B(t-1,b_{t-1}) \le 0, \\ \forall t \in \{1,2,\ldots,t^*\},\\ \forall a \in A,~\forall \theta \in \Theta,~\forall z \in Z, ~\forall b \in \mathcal{B},
\end{multline}
then there exist no solution of the belief update equation~\eqref{equation:discretesystem} such that $b_0 \in \mathcal{B}_0$, and $b_{t^*} \in \mathcal{B}_u$ for all $a \in A$ and all $\theta \in \Theta$.
\end{thm}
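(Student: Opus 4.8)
The plan is to argue by contradiction, exploiting the fact that condition~\eqref{equation:barrier-condition2} forces the barrier certificate to be non-increasing along every admissible trajectory. Suppose, contrary to the claim, that there exists a solution $\{b_t\}_{t=0}^{t^*}$ of the belief update equation~\eqref{equation:discretesystem} together with an action sequence $\{a_t\} \subset A$, a parameter realization $\theta \in \Theta$, and an observation sequence $\{z_t\} \subset Z$ such that $b_0 \in \mathcal{B}_0$ and $b_{t^*} \in \mathcal{B}_u^s$.

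The key step is to chain~\eqref{equation:barrier-condition2} along this trajectory. Because~\eqref{equation:barrier-condition2} is quantified over \emph{all} $a \in A$, $\theta \in \Theta$, and $z \in Z$, it holds in particular for the specific $a_t$, $\theta$, and $z_t$ realized at each step; substituting $b_t = f_{a_t}(b_{t-1},\theta,z_t)$ gives $B(t, b_t) - B(t-1, b_{t-1}) \le 0$ for every $t \in \{1,\ldots,t^*\}$. Summing this telescoping inequality from $t=1$ to $t=t^*$ yields
\begin{equation*}
B(t^*, b_{t^*}) \le B(0, b_0).
\end{equation*}

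Then I would invoke the boundary conditions. Since $b_0 \in \mathcal{B}_0$, condition~\eqref{equation:barrier-condition11} gives $B(0,b_0) < 0$, so the chained inequality forces $B(t^*, b_{t^*}) < 0$. On the other hand, since $b_{t^*} \in \mathcal{B}_u^s$, condition~\eqref{equation:barrier-condition1} gives $B(t^*, b_{t^*}) > 0$, a contradiction. Hence no such trajectory can exist, which is exactly the assertion of the theorem.

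The argument itself is elementary; the only point that requires care is matching the quantifiers, i.e.\ confirming that the robust condition~\eqref{equation:barrier-condition2}, holding uniformly over all modes, parameters, and observations, indeed specializes to the particular realization followed by an arbitrary admissible trajectory. This is what makes the guarantee robust to both the adversarial switching (arbitrary choice of $a$) and the interval uncertainty (arbitrary $\theta \in \Theta$), and it is the feature I would emphasize as the crux of the result.
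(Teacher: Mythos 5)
Your proof is correct and follows essentially the same route as the paper's: a contradiction argument that chains the non-increase condition~\eqref{equation:barrier-condition2} along the hypothesized trajectory to get $B(t^*,b_{t^*}) \le B(0,b_0) < 0$, contradicting~\eqref{equation:barrier-condition1}. If anything, your write-up is slightly more explicit than the paper's, which leaves the final appeal to~\eqref{equation:barrier-condition1} implicit.
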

\begin{proof}
The proof is carried out by contradiction. Assume at time instance ${t^*}$ there exist a solution to \eqref{equation:discretesystem} such that $b_0 \in \mathcal{B}_0$ and $b_{t^*} \in \mathcal{B}^s_u$. From inequality~\eqref{equation:barrier-condition2}, we have 
$$
B(t,b_t) \le B(t-1,b_{t-1})
$$
for all $t\in \{1,2,\ldots,{t^*}\}$, all actions $a \in A$, and $\theta \in \Theta$. Hence, $B(t,b_t) \le B(0,b_0)$ for all $t \in \{1,2,\ldots,{t^*}\}$. Furthermore, inequality~\eqref{equation:barrier-condition11} implies that 
$$
B(0,b_0) < 0 
$$
for all $b_0 \in \mathcal{B}_0$.  Since the choice of ${t^*}$ can be arbitrary, this is a contradiction because it implies that $B({t^*},b_{t^*}) \le B(0,b_0) < 0$. Therefore, there exist no solution of \eqref{equation:discretesystem} such that $b_0 \in \mathcal{B}_0$ and $b_{t^*} \in \mathcal{B}^s_u$ for any sequence of actions $a \in A$ and uncertain  probabilities belonging to $\Theta$. Therefore, the safety requirement is satisfied.
\end{proof}

The above theorem provides conditions under which the POMDP is \emph{guaranteed} to be safe. The next result brings forward a set of conditions, which verifies whether the optimality criterion~\eqref{eq:optimality} is satisfied. 

\begin{cor}\label{corollary-barrier-discrete:optimality}
Consider the belief update equation~\eqref{equation:discretesystem}, the uncertain  probabilities~\eqref{eq:uncertainparams} and the optimality criterion $\gamma$ as given by~\eqref{eq:optimality}. Let $\tilde{\gamma}:\mathbb{Z}_{\ge 0} \to \mathbb{R}$ satisfying
\begin{equation} \label{eq:contstraint-on-gamma}
\sum_{s=0}^{t^*} \tilde{\gamma}(s) \le \gamma.
\end{equation}
Given a set of initial beliefs $\mathcal{B}_0 \subset \mathcal{B}$, an unsafe set
\begin{equation}\label{eq:parametrized-unsafe-set}
{\mathcal{B}}_u^o = \left \{ (t,b) \mid r(b_t,a_t) > \gamma(t) \right\},
\end{equation}
 and a constant $t^*$, if there exists a function $B:\mathbb{Z} \times \mathcal{B} \to \mathbb{R}$  such that~\eqref{equation:barrier-condition1}-\eqref{equation:barrier-condition2} are satisfied with $\mathcal{B}^o_u$ instead of $\mathcal{B}^s_u$, then for all $b_0 \in \mathcal{B}_0$ the optimality criterion~\eqref{eq:optimality} holds.
\end{cor}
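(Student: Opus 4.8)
The plan is to reduce this statement to the barrier-certificate argument of Theorem~\ref{theorem-barrier-discrete}, while exploiting the fact that its proof establishes a per-step (not merely terminal) bound on the barrier function along trajectories. First I would observe that the three hypotheses required here, with $\mathcal{B}^o_u$ substituted for $\mathcal{B}^s_u$, are exactly the inequalities~\eqref{equation:barrier-condition1}--\eqref{equation:barrier-condition2}; in particular the unsafe-set condition~\eqref{equation:barrier-condition1} now reads $B(t,b)>0$ for every pair $(t,b)\in\mathcal{B}^o_u$, i.e. whenever $r(b_t,a_t)>\tilde{\gamma}(t)$. (I would silently read the $\gamma(t)$ appearing in~\eqref{eq:parametrized-unsafe-set} as $\tilde{\gamma}(t)$, since only $\tilde{\gamma}$ is defined.)

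Then I would replay the chain-of-inequalities step from the proof of Theorem~\ref{theorem-barrier-discrete}. Fix any trajectory of~\eqref{equation:discretesystem} with $b_0\in\mathcal{B}_0$, under an arbitrary action sequence $a\in A$ and arbitrary uncertain parameters $\theta\in\Theta$. Condition~\eqref{equation:barrier-condition2} gives $B(t,b_t)\le B(t-1,b_{t-1})$ for each $t\in\{1,\ldots,t^*\}$, so telescoping yields $B(t,b_t)\le B(0,b_0)$ for all such $t$, and condition~\eqref{equation:barrier-condition11} then forces $B(t,b_t)\le B(0,b_0)<0$ for every $t\in\{0,1,\ldots,t^*\}$. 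This is the key upgrade relative to Theorem~\ref{theorem-barrier-discrete}: because the reward constraint is imposed at every stage, I need the barrier to remain negative at every stage, which the telescoping already delivers.

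Next I would take the contrapositive of the unsafe-set condition. Since $B(t,b_t)<0$ while membership $(t,b_t)\in\mathcal{B}^o_u$ would require $B(t,b_t)>0$, no pair $(t,b_t)$ along the trajectory lies in $\mathcal{B}^o_u$. By the definition~\eqref{eq:parametrized-unsafe-set}, this is precisely $r(b_t,a_t)\le\tilde{\gamma}(t)$ for every $t\in\{0,1,\ldots,t^*\}$. Summing these stagewise bounds and invoking the constraint~\eqref{eq:contstraint-on-gamma} on $\tilde{\gamma}$ gives
\[
\sum_{s=0}^{t^*} r(b_s,a_s)\ \le\ \sum_{s=0}^{t^*}\tilde{\gamma}(s)\ \le\ \gamma,
\]
which is exactly the optimality criterion~\eqref{eq:optimality}. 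Since the trajectory, action sequence, and parameters were arbitrary, the bound holds for all $b_0\in\mathcal{B}_0$.

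The step I expect to require the most care is not any single inequality but the reinterpretation of the unsafe set: in Theorem~\ref{theorem-barrier-discrete} a violation is a terminal event at $t^*$, whereas~\eqref{eq:parametrized-unsafe-set} encodes a time-indexed family of forbidden configurations. The main obstacle is therefore bookkeeping, namely confirming that the telescoped negativity of $B$ excludes the trajectory from $\mathcal{B}^o_u$ at \emph{every} $t$ simultaneously, and that the decomposition of the cumulative reward into stagewise terms bounded by $\tilde{\gamma}(t)$ aligns with the summation constraint~\eqref{eq:contstraint-on-gamma}.
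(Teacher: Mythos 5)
Your proof is correct and follows the same overall strategy as the paper's (reduce to the barrier-certificate argument, conclude stagewise bounds $r(b_t,a_t)\le\tilde{\gamma}(t)$, sum, and invoke~\eqref{eq:contstraint-on-gamma}), but you handle one point more carefully than the paper does. The paper's proof invokes Theorem~\ref{theorem-barrier-discrete} as a black box, whose stated conclusion only excludes the trajectory from the unsafe set at the terminal time $t^*$, and then immediately asserts $r(b_t,a_t)\le\tilde{\gamma}(t)$ for \emph{all} $t\in\{0,1,\ldots,t^*\}$ --- a per-step claim that the cited theorem does not literally deliver. You instead replay the telescoping step inside the theorem's proof to get $B(t,b_t)\le B(0,b_0)<0$ at every stage, and combine this with the time-indexed reading of condition~\eqref{equation:barrier-condition1} on the set of pairs $(t,b)\in\mathcal{B}^o_u$ to exclude the trajectory from the unsafe set at every $t$ simultaneously. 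This is exactly the bookkeeping the paper elides, and your version is the rigorous one; your reading of the undefined $\gamma(t)$ in~\eqref{eq:parametrized-unsafe-set} as $\tilde{\gamma}(t)$ also matches the paper's evident intent. The only cost of your route is that it opens up the proof of Theorem~\ref{theorem-barrier-discrete} rather than citing its statement, but that is necessary here since the statement alone is not strong enough.
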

\begin{proof}
The proof is straightforward and an application of Theorem~\ref{theorem-barrier-discrete}. If conditions~\eqref{equation:barrier-condition1}-\eqref{equation:barrier-condition2} are satisfied with $\mathcal{B}^o_u$ instead of $\mathcal{B}^s_u$, based on Theorem~\ref{theorem-barrier-discrete}, we conclude that there exist no solution of the belief update equation~\eqref{equation:discretesystem} such that $b_0 \in \mathcal{B}_0$, and $b_{t^*} \in {\mathcal{B}}_u^o$ for all $a \in A$ and all $\theta \in \Theta$. Therefore, we have
$$
r(b_t,a_t) \le \tilde{\gamma}(t), \quad \forall t \in \{0,1,\ldots,t^*\}.
$$
Summing up both sides of the above equation from $t=0$ to $t=t^*$ yields
$$
\sum_{s=0}^{t^*} r(b_s,a_s) \le \sum_{s=0}^{t^*} \tilde{\gamma}(s).
$$
Then, from \eqref{eq:contstraint-on-gamma}, we conclude that $\sum_{s=0}^{t^*} r(b_s,a_s) \le \gamma$.
\end{proof}

\begin{figure}[tbp] 
\begin{center} 
\includegraphics[width=8.5cm]{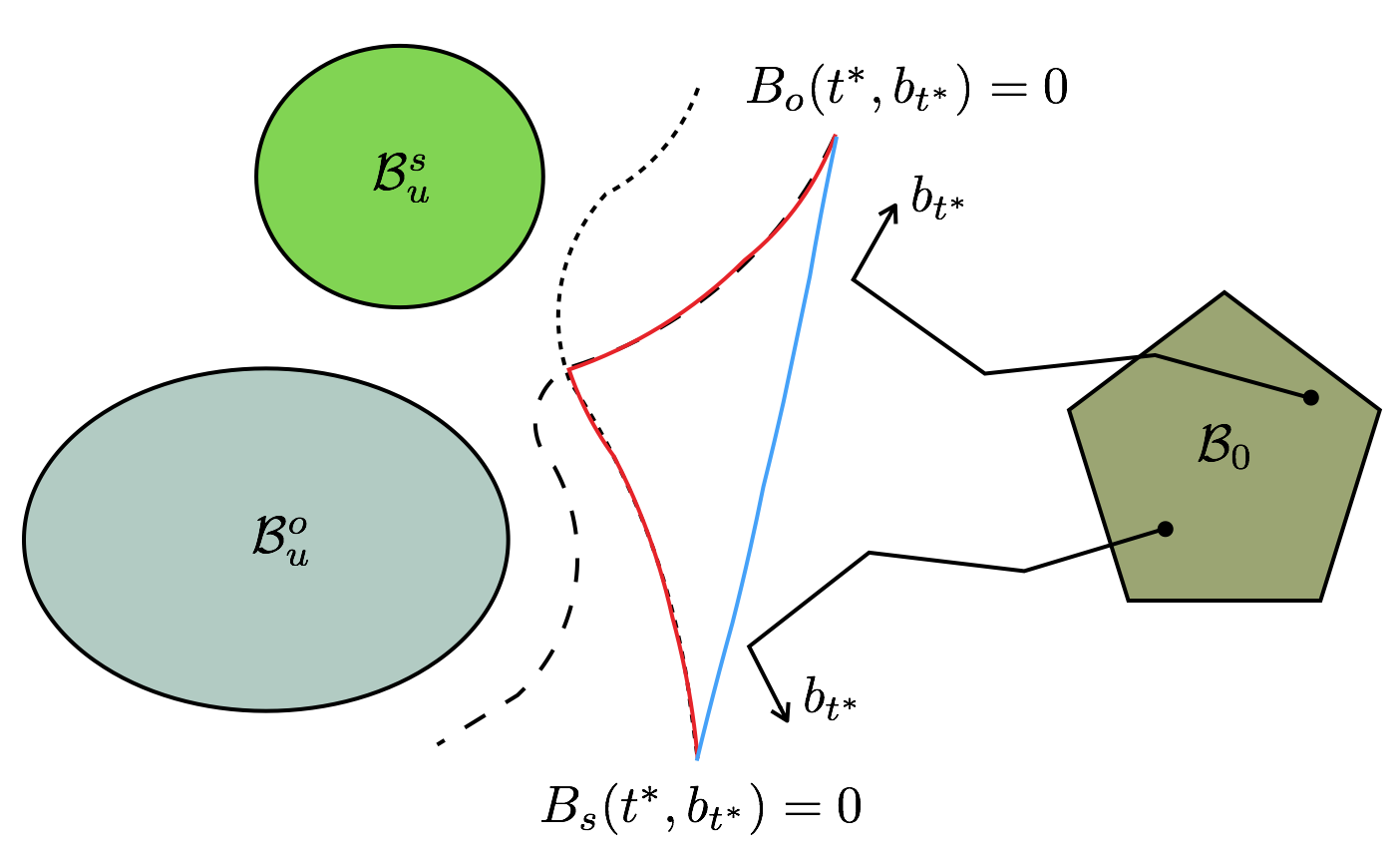}
\caption{Two methods for ensuring both safety and optimality. The zero-level sets of $B_s$ ($B_o$) separate the evolutions of the beliefs starting at $\mathcal{B}_0$ from $\mathcal{B}_u^s$  ($\mathcal{B}_u^o$). The red line illustrates the zero-level sets of the barrier certificate formed by taking the maximum of $B_s$ and $B_0$. The blue line illustrate the zero-level set of the barrier certificate formed by taking the convex hull of $B_s$ and $B_0$.}
\label{figure2}
\end{center}
\end{figure}
The technique used in Corollary~\ref{corollary-barrier-discrete:optimality} is analogous to the one used in~\cite{7171125,AHMADI201733} for bounding (time-averaged) functional outputs of systems described by partial differential equations. The method proposed here, however, can be used for a large class of discrete time systems and the belief update equation is a special case that is of our interest.

In practice, it is often desirable to make sure a design is both optimal and safe. The problem can be described by checking whether the solutions of the belief update switched dynamics~\eqref{equation:discretesystem} enter the following set 
$$
\mathcal{B}_u = \mathcal{B}_u^s \cup \mathcal{B}_u^o.
$$
To this end, we can adopt either of the following approaches (see Figure~\ref{figure2}). Both of these approaches are based on the construction of non-smooth barrier certificates. The first one, proposed in~\cite{7937882}, suggests finding the barrier certificate for $\mathcal{B}_u^s$ and $\mathcal{B}_u^o$ separately or in parallel. The barrier certificate for the set $\mathcal{B}_u$ is then the maximum of the two certificates, i.e., $B = \max \{B_s,B_o\}$, where $B_s$ is the barrier certificate for checking safety and $B_o$ is the barrier certificate for checking optimality. The second method proposed by the authors in~\cite{8264626,2018arXiv180104072A} suggests searching for a barrier certificate composed of the convex hull of the $B_s$ and $B_o$. In this paper, we adopt the latter method.


\subsection{Computational Method based on \\Sum-of-Squares Programming} \label{sec:SOS}

The belief update equation~\eqref{equation:belief update} is a rational function in the belief states $b_t(q)$, $q \in Q_s$
\begin{multline}\label{eq:belief-update-rational}
b_t(q') = \frac{S_a\left( b_{t-1}(q'),\theta,z_{t-1} \right)}{R_a\left( b_{t-1}(q'),\theta,z_{t-1}  \right)} \\
= \frac{O(q',a_{t-1},z_{t})\sum_{q\in Q}T(q,a_{t-1},q')b_{t-1}(q)}{\sum_{q'\in Q}O(q',a_{t-1},z_{t})\sum_{q\in Q}T(q,a_{t-1},q')b_{t-1}(q)}.
\end{multline}
The uncertain probabilities $T(q,a_{t-1},q')$ are parameters that belong to the set~\eqref{eq:uncertainparams}. Moreover, the unsafe  set~\eqref{eq:unsafeset} and the uncertainty set~\eqref{eq:uncertainparams}  are semi-algebraic sets, since they can be described by  polynomial inequalities. We further assume the set of initial beliefs is also given by a semi-algebraic set as follows
\begin{equation}\label{eq:intitial-belief-semialgebraic}
\mathcal{B}_0 = \bigg \{ b_0 \in \mathbb{R}^{|Q_s|} \mid l_i^0(b_0) \le 0,~i = 1,2,\ldots, n_0 \bigg\},
\end{equation}
and $g \in \mathcal{R}[b]$ as in~\eqref{eq:safety}.

At this stage, we are ready to present conditions based on sum-of-squares programs to verify safety of a given uncertain POMDP. 

\begin{cor}\label{cor:SOS-Safety}
Consider the POMDP belief update dynamics~\eqref{eq:belief-update-rational}, the  unsafe set~\eqref{eq:unsafeset}, the set of uncertain probabilities~\eqref{eq:uncertainparams}, the set of initial beliefs \eqref{eq:intitial-belief-semialgebraic}, and a constant $t^*>0$. If there  exist polynomial functions $B \in \mathcal{R}[t,b]$ of degree $d$,   $p^u_q \in {\Sigma}[b]$, $q \in Q_u$,  $p_i^0 \in {\Sigma}[b]$, $i = 1,2,\ldots, n_0$, $p^\theta_{q,a,q'} \in  \Sigma[b,\theta] $, $(q,a,q') \in T_u$, and constants $s_1,s_2>0$ such that
\begin{equation}\label{eq:setssos1}
B\left({t^*},b_{t^*}\right) +  \sum_{q\in Q_u} p^u_q(b_{t^*}) \left( g\left(b_{t^*}(q)\right) - \lambda \right)- s_1 \in \Sigma \left[b_{t^*}\right],
\end{equation}
\begin{equation}\label{eq:setssos2}
-B\left(0,b_0\right) + \sum_{i=1}^{n_0} p_i^0(b_0) l_i^0(b_0) - s_2 \in \Sigma \left[b_0\right],
\end{equation}
and 
\begin{multline}\label{eq:setssos3}
- {R_a\left( b_{t-1} \right)}^d\bigg(B\left(t,\frac{S_a\left( b_{t-1}, \theta,z \right)}{R_a\left( b_{t-1},\theta,z \right)} \right) - B(t-1,b_{t-1}) \\ - \sum_{(q,a,q') \in T_u} p^\theta_{q,a,q'}(\theta,b_{t-1})(\underline{l}_{q,a,q'} - \theta_{q,a,q'})(\overline{l}_{q,a,q'}-\theta_{q,a,q'})\bigg)  \\  \in \Sigma[t,b_{t-1},\theta],  \forall  t \in \{1,2,\ldots,{t^*}\},~z \in Z,~a \in A,
\end{multline}
then there exists no $b_0 \in \mathcal{B}_0$ such that $b_{t^*} \in \mathcal{B}_u$.
\end{cor}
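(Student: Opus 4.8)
The plan is to read Corollary~\ref{cor:SOS-Safety} as the sum-of-squares relaxation of Theorem~\ref{theorem-barrier-discrete}: each of the three membership conditions \eqref{eq:setssos1}--\eqref{eq:setssos3} is a Positivstellensatz (S-procedure) certificate that forces one of the three barrier inequalities \eqref{equation:barrier-condition1}--\eqref{equation:barrier-condition2} to hold on the relevant semialgebraic set. Once those three inequalities are in hand, the conclusion follows immediately by invoking Theorem~\ref{theorem-barrier-discrete}. So I would structure the proof as three independent implications followed by a one-line appeal to the already-proved theorem.

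The two boundary conditions are the routine part. Since any $\sigma \in \Sigma$ satisfies $\sigma(\cdot)\ge 0$ everywhere, condition \eqref{eq:setssos2} gives $-B(0,b_0) + \sum_i p_i^0(b_0) l_i^0(b_0) \ge s_2$ for all $b_0$. Restricting to $b_0 \in \mathcal{B}_0$, where each constraint satisfies $l_i^0(b_0)\le 0$ by \eqref{eq:intitial-belief-semialgebraic} and each multiplier satisfies $p_i^0\ge 0$, the products $p_i^0 l_i^0$ are nonpositive and can be discarded, leaving $-B(0,b_0)\ge s_2>0$, i.e. \eqref{equation:barrier-condition11}. Condition \eqref{eq:setssos1} is handled identically on the unsafe set \eqref{eq:unsafeset}: I would check that the sign of the multiplier term is chosen so that $\sum_{q} p^u_q(g-\lambda)$ is sign-definite on $\{g>\lambda\}$ and can be dropped, yielding $B(t^*,b_{t^*})\ge s_1 >0$, i.e. \eqref{equation:barrier-condition1}. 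The strict margins $s_1,s_2>0$ are exactly what upgrade the nonstrict SOS inequalities to the strict conditions demanded by the theorem.

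The difference condition \eqref{eq:setssos3} is the crux, because the belief update \eqref{eq:belief-update-rational} is rational rather than polynomial and the uncertainty $\theta$ must be quantified away. Here I would exploit that $B$ has degree $d$: substituting $b_t = S_a/R_a$ into $B(t,\cdot)$ and multiplying by $R_a^d$ clears all denominators, since every monomial of $B$ of degree $\le d$ picks up a compensating power of $R_a$, so that $R_a^d\,B(t,S_a/R_a)$ is genuinely a polynomial in $(t,b_{t-1},\theta)$. The interval uncertainty is encoded through the factor $(\underline{l}_{q,a,q'}-\theta_{q,a,q'})(\overline{l}_{q,a,q'}-\theta_{q,a,q'})$, which is nonpositive precisely when $\theta_{q,a,q'}\in[\underline{l}_{q,a,q'},\overline{l}_{q,a,q'}]$, i.e. when $\theta\in\Theta$ as in \eqref{eq:uncertainparams}; multiplied by the SOS multiplier $p^\theta_{q,a,q'}\ge 0$ these products are again sign-definite and drop out over $\Theta$. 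Using $\sigma\ge 0$ for the SOS expression then gives, for each fixed $t,a,z$, the inequality $-R_a^d\big(B(t,S_a/R_a)-B(t-1,b_{t-1})-\textstyle\sum p^\theta(\cdots)\big)\ge 0$.

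The single nonroutine point I expect to defend is the division by $R_a^d$: turning $-R_a^d(\cdot)\ge 0$ into $B(t,f_a)-B(t-1,b_{t-1})\le 0$ requires $R_a^d>0$, hence strict positivity of the normalizer $R_a(b_{t-1},\theta,z)=\sum_{q'}O(q',a,z)\sum_q T(q,a,q')b_{t-1}(q)$. This holds on the admissible domain because $R_a$ is a sum of products of nonnegative probabilities and belief entries and equals the observation likelihood $P(z\mid a,b_{t-1})$, which is positive for every realizable observation; I would state this positivity as the standing assumption that makes the Bayesian update \eqref{equation:belief update} well defined. With $R_a>0$ established, the inequality direction is preserved and, discarding the nonpositive $\theta$-multiplier term, I recover \eqref{equation:barrier-condition2} for all $t\in\{1,\dots,t^*\}$, all $a\in A$, all $z\in Z$ and all $\theta\in\Theta$. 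Having verified \eqref{equation:barrier-condition1}--\eqref{equation:barrier-condition2}, the claim that no trajectory with $b_0\in\mathcal{B}_0$ reaches $\mathcal{B}_u$ follows verbatim from Theorem~\ref{theorem-barrier-discrete}.
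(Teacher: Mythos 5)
Your proposal takes essentially the same route as the paper's proof: read \eqref{eq:setssos1}--\eqref{eq:setssos3} as Positivstellensatz certificates for the three barrier conditions \eqref{equation:barrier-condition1}--\eqref{equation:barrier-condition2} and then invoke Theorem~\ref{theorem-barrier-discrete}; in fact you supply more detail than the paper on the two points it glosses over, namely that multiplying by $R_a^d$ genuinely clears the denominators of $B(t,S_a/R_a)$ and that dividing back out requires $R_a>0$, which you correctly justify as positivity of the observation likelihood. The one check you defer --- the sign of the multiplier in \eqref{eq:setssos1} --- is the one that does not come out in your favor as the condition is printed: with the $+$ sign, $\sum_q p^u_q\,(g-\lambda)$ is nonnegative on $\mathcal{B}_u^s=\{g>\lambda\}$ and is \emph{added} to $B$, so it cannot be discarded to conclude $B\ge s_1$; the certificate must read $B-\sum_q p^u_q\,(g-\lambda)-s_1\in\Sigma$ (this appears to be a sign slip in the corollary itself, which the paper's proof does not surface because it only cites Propositions~\ref{chesip} and~\ref{spos} without expanding them). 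With that sign corrected, your argument closes exactly as the paper's does.
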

\begin{proof}
 Sum-of-squares conditions~\eqref{eq:setssos1} and~\eqref{eq:setssos2} are a direct consequence of applying Propositions~\ref{chesip} and~\ref{spos} in Appendix~A to verify conditions \eqref{equation:barrier-condition1} and \eqref{equation:barrier-condition11}, respectively. Furthermore, condition~\eqref{equation:barrier-condition2} for system \eqref{eq:belief-update-rational} can be re-written~as  
\begin{multline*}
B\left(t,\frac{S_a\left( b_{t-1},\theta,z \right)}{R_a\left( b_{t-1},\theta,z \right)} \right) - B(t-1,b_{t-1})>0,\\ \forall a \in A,~\forall \theta \in \Theta,~\forall z \in Z.
\end{multline*}
Since $\theta \in \Theta$ is a semi-algebraic set, we use Propositions~\ref{chesip} and~\ref{spos} in Appendix~A to obtain
\begin{multline*}
B\left(t,\frac{S_a\left( b_{t-1}, \theta,z \right)}{R_a\left( b_{t-1},\theta,z \right)} \right) - B(t-1,b_{t-1}) \\ - \sum_{(q,a,q') \in T_u} p^\theta_{q,a,q'}(\theta,b_{t-1})(\underline{l}_{q,a,q'} - \theta_{q,a,q'})(\overline{l}_{q,a,q'}-\theta_{q,a,q'}) \\ \in \Sigma[t,b_{t-1},\theta], \forall a \in A,~\forall z \in Z.
\end{multline*}
for $p^\theta_{q,a,q'} \in  \Sigma[b,\theta] $, $(q,a,q') \in T_u$.
Given that $R_a\left( b_{t-1}(q'),\theta,z \right)$ is a positive polynomial of degree one, we can relax the above inequality into a sum-of-squares condition given by
\begin{multline}
- {R_a\left( b_{t-1},\theta,z  \right)}^d\bigg(B\left(t,\frac{S_a\left( b_{t-1},\theta,z  \right)}{R_a\left( b_{t-1},\theta,z  \right)} \right) - B\left(t-1,b_{t-1} \right) \\
- \sum_{(q,a,q') \in T_u} p^\theta_{q,a,q'}(\theta,b_{t-1})(\underline{l}_{q,a,q'} - \theta_{q,a,q'})(\overline{l}_{q,a,q'}-\theta_{q,a,q'})
 \bigg) \\  \in \Sigma[t,b_{t-1},\theta]. \nonumber
\end{multline}
Hence, if ~\eqref{eq:setssos3} holds, then~\eqref{equation:barrier-condition2}  is satisfied as well. From Theorem~\ref{theorem-barrier-discrete}, we infer that there is no $b_t(q)$ at time $t^*$ such that $b_0(q) \in \mathcal{B}_0$ and  $g\left(b_{t^*}(q) \right)> \lambda$. Equivalently, the safety requirement is satisfied at time $t^*$. That is, $g\left(b_{t^*}(q) \right) \le \lambda$.
\end{proof}

Checking whether optimality holds can also be cast into sum-of-squares programs. To this end, we assume the reward function is a polynomial (or can be approximated by a polynomial\footnote{This assumption is realistic, since the beliefs belong to a bounded set (a unit simplex) and by Stone-Weierstrass theorem any continuous function defined on a bounded domain can be uniformly approximated arbitrary close by a polynomial~\cite{Stone37}.}) in beliefs , i.e., $R \in \mathcal{R}[b]$.

The following Corollary can be derived using similar arguments as the proof of Corollary~\ref{cor:SOS-Safety}.

\begin{cor}
Consider the POMDP belief update dynamics~\eqref{eq:belief-update-rational}, the set of uncertain probabilities~\eqref{eq:uncertainparams},  the set of initial beliefs \eqref{eq:intitial-belief-semialgebraic}, and a constant $t^*>0$. If there  exist polynomial functions $\tilde{\gamma} \in \mathcal{R}[t]$ characterizing the unsafe set~\eqref{eq:parametrized-unsafe-set}, $B \in \mathcal{R}[t,b]$ with degree $d$,   $p^u_q \in {\Sigma}[b]$, $q \in Q_u$,  $p_i^0 \in {\Sigma}[b]$, $i = 1,2,\ldots, n_0$, $p^\theta_{q,a,q'} \in  \Sigma[b,\theta] $, $(q,a,q') \in T_u$, and constants $s_1,s_2>0$ such that \eqref{eq:contstraint-on-gamma}, and \eqref{eq:setssos1}-\eqref{eq:setssos3} are satisfied, then for all $b_0 \in \mathcal{B}_0$, the optimality criterion~\eqref{eq:optimality} holds.
\end{cor}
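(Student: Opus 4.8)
The plan is to mirror the two-step pattern already established in the excerpt: combine the summation argument of Corollary~\ref{corollary-barrier-discrete:optimality} with the sum-of-squares relaxation of Corollary~\ref{cor:SOS-Safety}, taking the parametrized set $\mathcal{B}_u^o$ from~\eqref{eq:parametrized-unsafe-set} in place of the safety set $\mathcal{B}_u^s$. First I would record the structural consequence of the standing hypothesis $R \in \mathcal{R}[b]$: it makes $r(b_t,a_t) = \sum_{q\in Q} b_t R(q,a_t)$ a polynomial in the beliefs, so the unsafe set $\mathcal{B}_u^o = \{(t,b) \mid r(b_t,a_t) > \tilde{\gamma}(t)\}$ is semi-algebraic, described by the single polynomial inequality $r(b_t,a_t) - \tilde{\gamma}(t) > 0$ with $\tilde{\gamma} \in \mathcal{R}[t]$. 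This is exactly the property needed to invoke the Positivstellensatz-type Propositions~\ref{chesip} and~\ref{spos}.

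Next I would reinterpret~\eqref{eq:setssos1}--\eqref{eq:setssos3} with $\mathcal{B}_u^o$ substituted for $\mathcal{B}_u^s$: in~\eqref{eq:setssos1} the multiplier term $\sum_{q\in Q_u} p_q^u(g-\lambda)$ is replaced by a multiplier acting on $(r(b_t,a_t) - \tilde{\gamma}(t))$, while~\eqref{eq:setssos2} and~\eqref{eq:setssos3} are left unchanged. Exactly as in the proof of Corollary~\ref{cor:SOS-Safety}, multiplying through by the positive polynomial $R_a^d$ clears the rational denominator of the belief-update map~\eqref{eq:belief-update-rational} without altering signs, so~\eqref{eq:setssos3} certifies the descent inequality~\eqref{equation:barrier-condition2}. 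Hence the three certificates imply that $B$ satisfies the barrier conditions~\eqref{equation:barrier-condition1}--\eqref{equation:barrier-condition2} relative to $\mathcal{B}_u^o$, and Theorem~\ref{theorem-barrier-discrete} then guarantees that no belief trajectory starting in $\mathcal{B}_0$ ever enters $\mathcal{B}_u^o$; equivalently $r(b_t,a_t) \le \tilde{\gamma}(t)$ for every $t \in \{0,1,\ldots,t^*\}$. Summing this pointwise bound over $s=0,\ldots,t^*$ and invoking the budget constraint~\eqref{eq:contstraint-on-gamma} yields $\sum_{s=0}^{t^*} r(b_s,a_s) \le \sum_{s=0}^{t^*}\tilde{\gamma}(s) \le \gamma$, which is precisely~\eqref{eq:optimality}; this final step is verbatim the conclusion of Corollary~\ref{corollary-barrier-discrete:optimality}.

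The main obstacle I expect is the time-parametrization of $\mathcal{B}_u^o$. Whereas the safety set is tested only at the terminal slice $t=t^*$, the optimality set must be avoided at every intermediate step, so the positivity condition~\eqref{equation:barrier-condition1} has to be enforced over the whole graph $\{(t,b_t)\}$ rather than at a single time. I would dispatch this by appealing to the observation, already exploited inside the proof of Theorem~\ref{theorem-barrier-discrete}, that $t^*$ is arbitrary: the contradiction argument applies at each $t$ and delivers $r(b_t,a_t)\le\tilde{\gamma}(t)$ simultaneously for all $t \le t^*$. The point requiring genuine care is that a single certificate and a single multiplier must be valid uniformly in $t$, which is exactly what the membership $B \in \mathcal{R}[t,b]$ and the explicit $t$-dependence built into~\eqref{eq:setssos3} are designed to supply; once uniformity in $t$ is secured, the remainder is the routine Positivstellensatz bookkeeping carried out in Corollary~\ref{cor:SOS-Safety}.
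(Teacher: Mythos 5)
Your proposal is correct and follows essentially the route the paper intends: the paper offers no written proof of this corollary beyond the remark that it ``can be derived using similar arguments as the proof of Corollary~\ref{cor:SOS-Safety},'' and your argument is precisely that derivation --- reuse the sum-of-squares certificates of Corollary~\ref{cor:SOS-Safety} with $\mathcal{B}_u^o$ in place of $\mathcal{B}_u^s$, invoke Theorem~\ref{theorem-barrier-discrete} to exclude trajectories entering $\mathcal{B}_u^o$, and then sum the pointwise bounds $r(b_t,a_t)\le\tilde{\gamma}(t)$ and apply \eqref{eq:contstraint-on-gamma} exactly as in Corollary~\ref{corollary-barrier-discrete:optimality}. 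Your explicit treatment of the time-parametrization of $\mathcal{B}_u^o$ (enforcing the positivity condition on every time slice rather than only at $t^*$) is in fact more careful than the paper's own exposition, which leaves that point implicit.
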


\section{Numerical Experiment} \label{sec:example}

We demonstrate the applicability of our methods on a variant of the \emph{RockSample} problem~\cite{smith2004heuristic}. 
We model the problem using the input language of the probabilistic model checker PRISM~\cite{KNP11}. 
In a Python toolchain, we employ the model checker Storm~\cite{DBLP:conf/cav/DehnertJK017} to build the explicit state space of the examples.
In order to check the sum-of-squares conditions formulated in Section~\ref{sec:SOS}, we use diagonally-dominant-sum-of-squares (DSOS) relaxations of the sum-of-squares programs implemented through the Systems Polynomial Optimization Toolbox (SPOT)~\cite{spot} (for more details see~\cite{ahmadi2017dsos,8263706}). 

\subsection{Uncertain POMDP Model for Mars Rover Exploration}

A Mars rover explores a terrain, where ``scientifically valuable'' rocks may be hidden. 
The locations of the rocks are known, but it is unknown whether they have the type ``good'' or ``bad''. 
Once the rover moves to the immediate location of a rock, it can sample its type. 
As sampling is expensive, the rover is equipped with a noisy long-range sensor that returns an estimate on the type of the rock. 
The accuracy of the sensor decreases with the distance to the rock.

\begin{figure}
	\centering
	\includegraphics[scale=0.3]{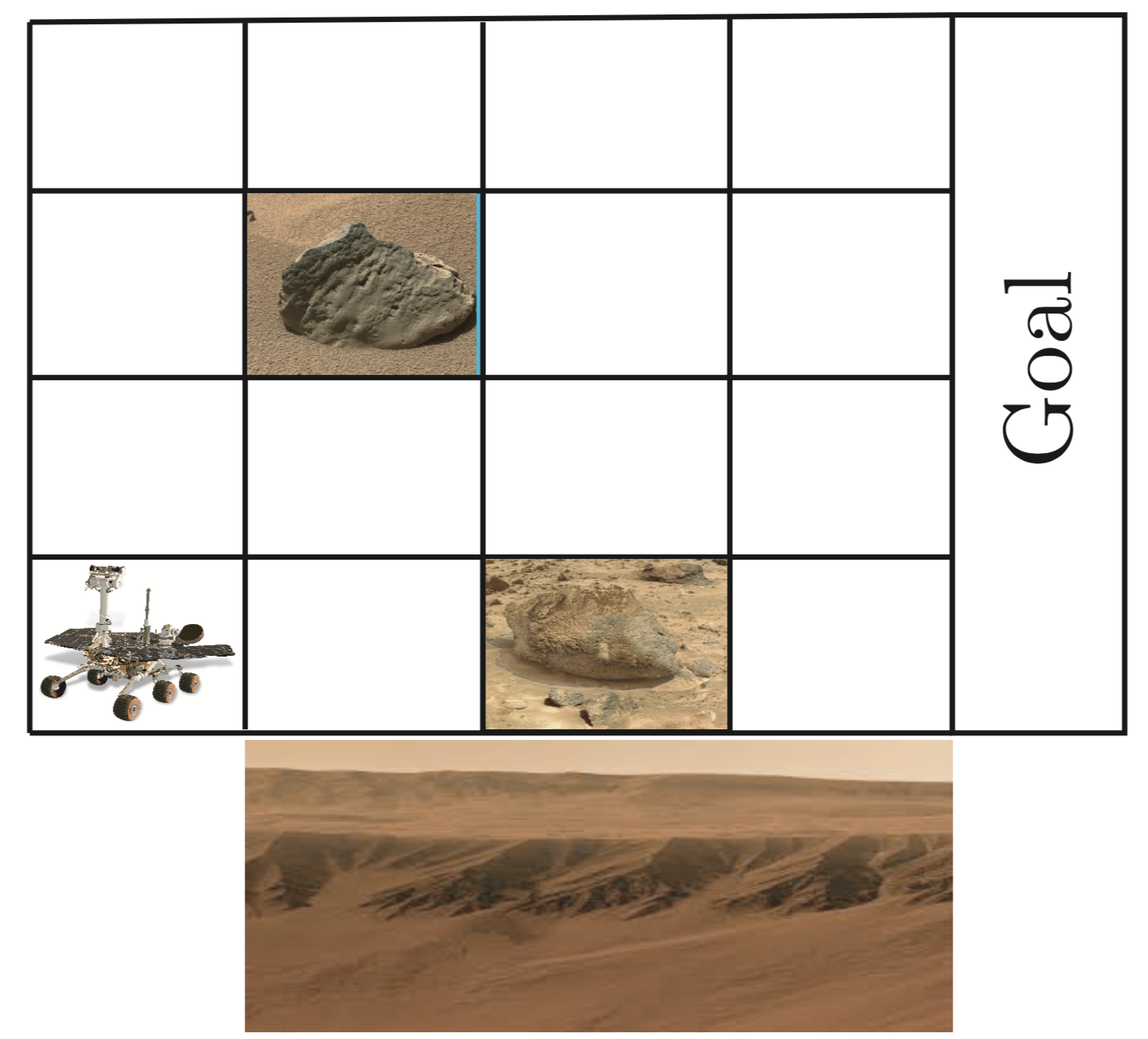}
	\caption{A \emph{RockSample}[$4,2$] instance where the initial position of the rover and the two rock positions in the grid are known. To the right is the goal area, and to the lower side of the grid is a sand dune from which the rover may fall over.}
	\label{fig:marsrover}
\end{figure}

Formally, \emph{RockSample}[$n,k$] describes an instance of the problem with the terrain being a grid of size $n\times n$ and $k$ rocks, which may have one of the types $\mathit{RockType}_i=\{\mathit{good},\mathit{bad}\}$ for $1\leq i\leq k$.
The rover may choose from the actions $\{\mathit{Up}, \mathit{Down}, \mathit{Left}, \mathit{Right}, \mathit{Sample}, \mathit{Check_1},\ldots,\mathit{Check_k}\}$.
When the rover moves off the right edge of the grid, it reaches its \emph{goal} area, where it receives a \emph{reward} of $10$. 
Sampling of a rock yields a reward of $10$ if the rock is $\mathit{good}$, and $-10$ otherwise. 
The potentially negative reward causes an incentive to predict the type of a rock in advance.
Executing action $\mathit{Check_i}$ returns a noisy \emph{observation} whether rock $i$ is $\mathit{good}$ or $\mathit{bad}$.
The probability of a wrong observation decreases with the distance to rock $i$.
The underlying model is a POMDP, where the positions of the rover and the rock are observable, while the type of the rocks is not observable unless a rock has been sampled.
The \emph{belief} describe the probability of the correct rock types. 
To maximize the (expected) reward, the rover aims to correctly estimate the types in order to not sample a $\mathit{bad}$ rock.

To account for the full potential of our method, we augmented the original \emph{RockSample} problem as described above by (1) uncertainty and (2) safety considerations.
First, concrete probabilities for wrong observations using the long-range sensor seem unrealistic when one considers that they may be the result of simulations and statistical inference of a probabilistic sensor model.
Therefore, we introduce \emph{interval} uncertainties. 
For instance, if from a certain distance there is a probability of $0.5$ for a faulty observation, we may assume that this probability lies within the interval $[0.5,0.6]$ to account for even worse accuracy of the sensor.
A policy that maximizes the reward for the rover should then be \emph{robust} against the uncertainties.

Secondly, we assume the rover has a certain probability to fall off a sand dune located at the bottom of the terrain.
Safety considerations imply that the probability of falling off the dune should be less than $10\%$.
Such scenarios are commonly referred to as \emph{slippery grid worlds}.
The problem for a $4\times 4$ grid and $2$ rocks is depicted in Figure~\ref{fig:marsrover}.

\subsection{Numerical Results}

In our model, we assume the rock at the bottom of the grid is $\mathit{bad}$, and the other  is $\mathit{good}$. 

\subsubsection{Case I}  The first scenario we consider pertains to a policy that checks the type of the rocks from the initial state, then moves to the goal region after having sufficient confidence about the types of the rock according to a nominal observation probability. In order to model, the distance from the the rock positions and limited sensor accuracy, we assume the probability of having a correct observation belongs the interval $[0.1,0.2]$ and we set the nominal observation probability to $0.2$. Given the policy designed for the nominal observation probability, our goal here is to find a lower bound on $\lambda$ satisfying 
$
P\left(\mathit{rock1\_good} \cap \mathit{rock2\_bad}\right) = b_{t^*}(rock1\_good) b_{t^*}(\mathit{rock2\_bad}) \ge  \lambda,
$
which lower-bounds the probability of identifying the rocks correctly. At the same time, we want to make sure that the rover does not move to the three slippery states at the bottom of the grid. We embed this safety constraint as $P_{slip}=b_{t^*}(\mathit{slipping\_state}) \le 0.1$. To this end, we construct two barrier certificate of fixed degree  using Corollary~\ref{cor:SOS-Safety} and perform a line search on the values of $\lambda$. 

Table \ref{table1} demonstrates the results and shows that increasing the degree of the barrier certificates improves the accuracy of the lower bound. Experiments using PRISM and Storm also corroborate the consistency of these results. In the worst case of the uncertain observation probabilities, the nominal policy achieves the values of $P\left(\mathit{rock1\_good} \cap \mathit{rock2\_bad}\right) = 0.61$ at $t^* = 10$ and $P\left(\mathit{rock1\_good} \cap \mathit{rock2\_bad}\right) = 0.84$ at $t^* = 20$. Figures (4.a) and (4.b) show two snapshots of how the Mars rover moves given this policy: the Mars rover stops at the initial position and use sensors to collect information about the rocks and then moves to the goal region.

\begin{table}[!tp]
\begin{center}
\begin{tabular}{c||c||c||c}
$deg(B)$  & 1 & 2 & 3 \\
\hline
$\lambda$ & 0.12 & 0.29  & 0.57
\end{tabular}~~for $t^* =10$.
\begin{tabular}{c||c||c||c}
$deg(B)$  & 1 & 2 & 3 \\
\hline
$\lambda$ & 0.31 & 0.44  & 0.73
\end{tabular}~~for $t^* =20$.
\caption{Numerical results on the lower-bounds on $\lambda$ for two different horizons in Case I.}\label{table1}
\end{center}
\end{table}%


\subsubsection{Case II}   We set  the probability of having a correct observation  to belong to the interval $[0.32, 0.42]$ and the nominal probability is set to $0.42$. The Mars rover is given a policy such that it  first moves closer to the rocks, then checks the type of the rocks using the sensor, and moves to the goal after identifying the rock types. Figures (4.c) and (4.d) show two snapshots of the trajectory of the Mars rover over the grid using the nominal policy. In this case, we are interested to find lower bounds on $\lambda_1$ and $\lambda_2$  satisfying 
$
 b_{10}(rock1\_good)  \ge  \lambda_1,
$
$
 b_{20}(rock2\_bad)  \ge  \lambda_2,
$
which corresponds to the belief in identifying each individual rock accurately. 

Table \ref{table2} presents the results and demonstrates that increasing the degree of the barrier certificates enhances the  lower bounds. These results tally with experiments in PRISM and Storm, which show that in the worst case, we have $ b_{10}(rock1\_good) = 0.92$ and $b_{20}(rock2\_bad)= 0.94$. 

\begin{table}[!tp]
\begin{center}
\begin{tabular}{c||c||c||c}
$deg(B)$  & 1 & 2 & 3 \\
\hline
$\lambda_1$ & 0.37 & 0.65  & 0.84
\end{tabular}
\begin{tabular}{c||c||c||c}
$deg(B)$  & 1 & 2 & 3 \\
\hline
$\lambda_2$ & 0.32 & 0.73  & 0.89
\end{tabular}
\caption{Numerical results on the lower-bounds on $\lambda_1$ and $\lambda_2$ in Case II.}\label{table2}
\end{center}
\end{table}%


\begin{figure}
\begin{centering}
\begin{tabular}{cc}
  \includegraphics[width=35mm]{marsrocksample-01} &   \includegraphics[width=35mm]{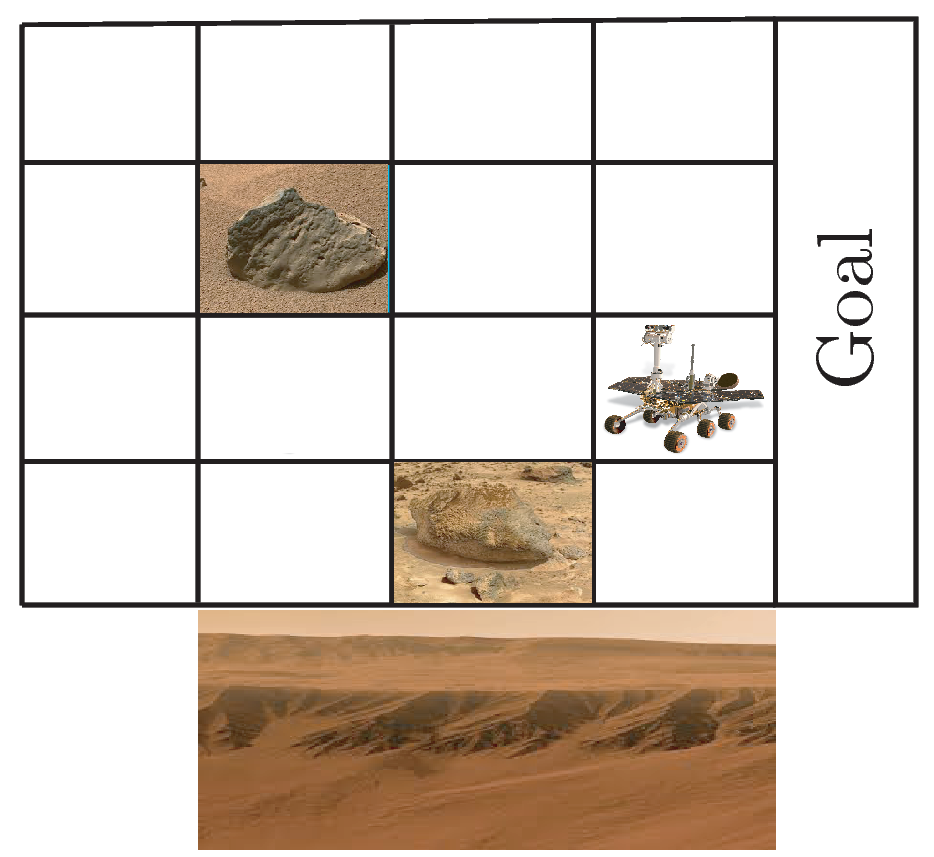}   \\
(a) $t=10$ & (b) $t=45$ \\[6pt]
 \includegraphics[width=35mm]{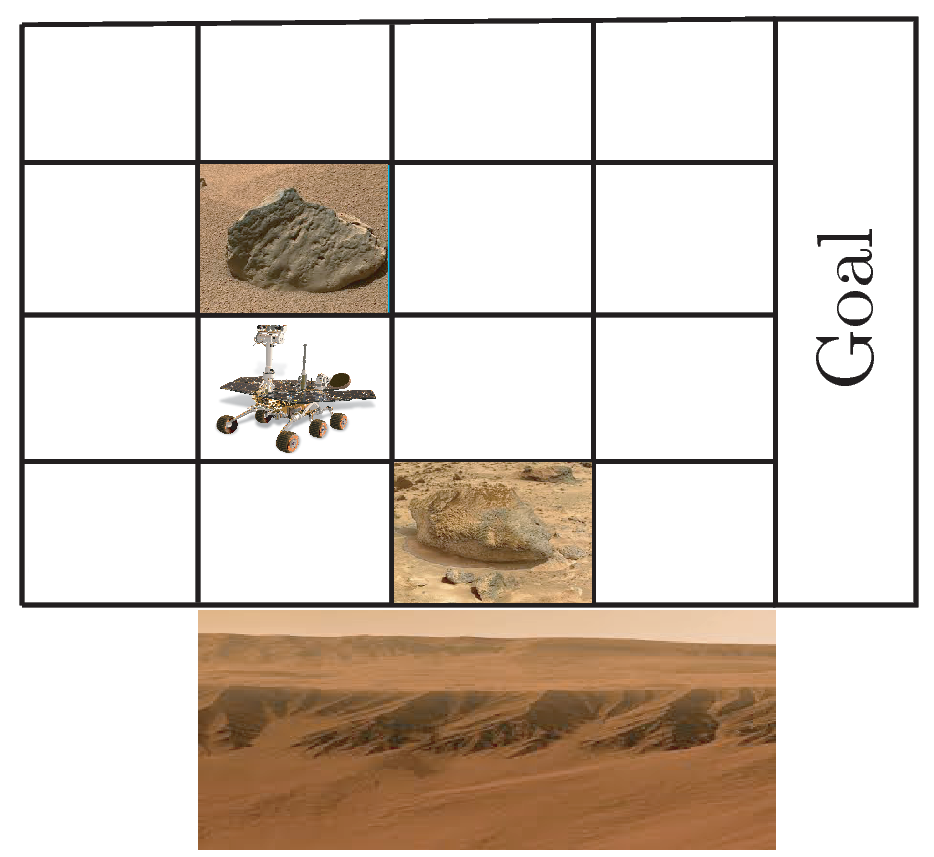} &   \includegraphics[width=35mm]{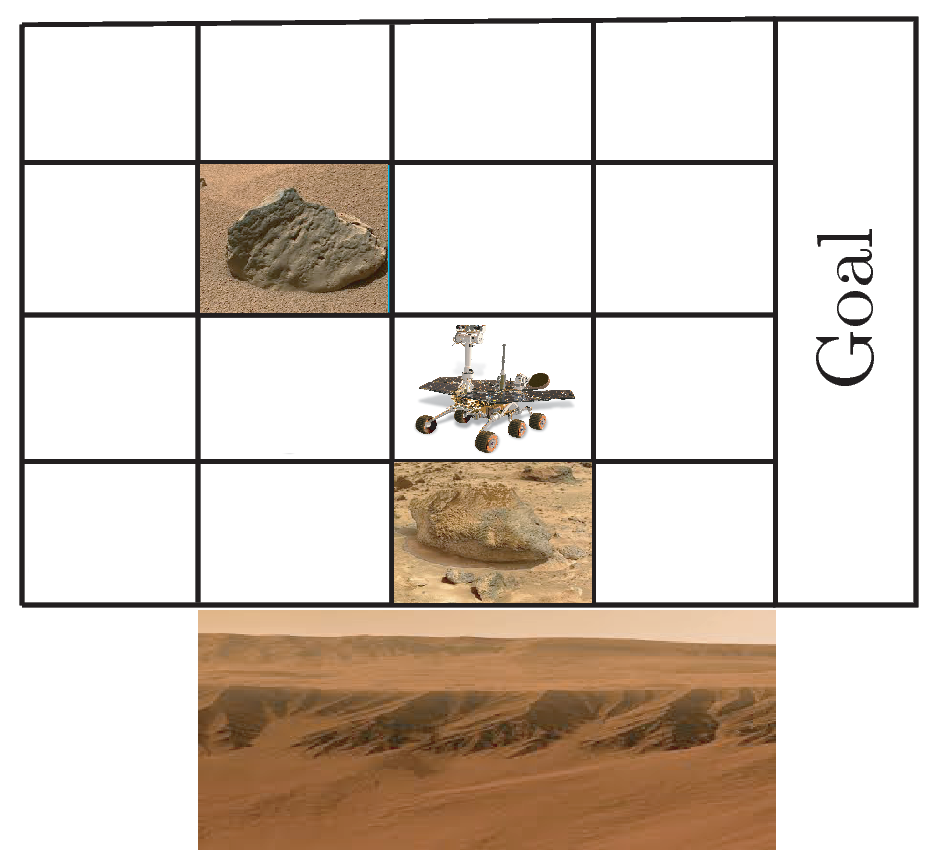} \\
(c) $t=10$ & (d) $t=20$ 
\end{tabular}
\end{centering}
\caption{Top: Positions of the Mars rover at certain time steps $t$ with the first policy. Bottom: Positions of the Mars rover at certain time steps $t$ with the second policy.}
\label{fig:policies}
\end{figure}

\section{CONCLUSIONS AND FUTURE WORK} \label{sec:conclusions}

We proposed an approach for verifying the safety and/or optimality properties of POMDPs with uncertain transition/observation probabilities. The transition and/or observation uncertainties we considered belonged to fixed intervals. We cast the POMDP analysis problem into a switched system analysis problem  and we brought forward a method based on barrier certificates. We showed that  we can verify the satisfaction of optimality or safety requirements by computing a  barrier certificate using sum-of-squares programming. We illustrated the applicability of our method on a Mars rover exploration example.

In this work, we considered the worst case analysis with the uncertain transition and/or observation probabilities. However, this analysis may be too conservative for problems where certain information about the transition/observation probabilities in terms of a probability density function is known.  In this regard, the application of the scenario approach seems relevant~\cite{CAMPI2009149}. Furthermore, the proposed method based on barrier certificates for verification of the POMDPs can also be used to synthesize policies ensuring both safety and optimality.


%


\bibliography{references}
\bibliographystyle{IEEEtran}


\appendix

\subsection{Sum-of-Squares Polynomials} \label{app:SOS}
A polynomial $p(x)$ is a sum-of-squares polynomial if $\exists p_i(x) \in \mathcal{R}[x]$, $i \in \{1, \ldots, n_d\}$ such that $p(x) = \sum_i p_i^2(x)$. Hence $p(x)$ is clearly non-negative. A set of polynomials $p_i$ is called \emph{SOS decomposition} of $p(x)$. The converse does not hold in general, that is, there exist non-negative polynomials which do not have an SOS decomposition~\cite{Par00}.  The computation of SOS decompositions, can be cast as an SDP (see~\cite{Par00,choi1995sums,CTVG99}). The Theorem below proves that, in sets satisfying a property stronger than compactness, any positive polynomial can be expressed as a combination of sum-of-squares polynomials and polynomials describing the set.  

For a set of polynomials $\bar{g} = \{g_1(x), \ldots, g_m(x)\}$, $m \in \nt$, the \emph{quadratic module} generated by $m$ is 
\begin{equation}
M(\bar{g}):= \left\lbrace \sigma_0 +\sum_{i = 1}^{m} \sigma_i g_i | \sigma_i \in \Sigma[x]\right\rbrace.
\end{equation}
A quadratic module $M\in \mathcal{R}[x]$ is said \emph{archimedean} if $\exists N \in \nt $ such that $N - |x|^2 \in M.$ An archimedian set is always compact~\cite{NS08}. At this point, we recall the following result~\cite[Theorem 2.14]{Las09}.

\begin{thm}[Putinar Positivstellensatz]
\label{thm:Psatz}
Suppose the quadratic module $M(\bar{g})$ is archimedian. Then for every $f \in \mathcal{R}[x]$, $$f>0~\forall~x\in \{x | g_1(x)\geq 0, \ldots, g_m(x)\geq 0 \} \Rightarrow f \in (\bar{g}).$$
\end{thm}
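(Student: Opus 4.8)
The plan is to prove the Putinar Positivstellensatz by dualizing the membership question $f \in M(\bar g)$ and attacking it through a separation argument together with the solution of the $K$-moment problem — this is Putinar's original functional-analytic route. Throughout write $M := M(\bar g)$ for the quadratic module, $K := \{x \mid g_1(x)\ge 0,\ldots,g_m(x)\ge 0\}$ for the basic closed semialgebraic set, and note that $M$ is a convex cone in the vector space $\mathcal{R}[x]$.

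The first step is to extract the algebraic meaning of the Archimedean hypothesis. From $N - \sum_i x_i^2 \in M$ I would prove that $1$ is an \emph{order unit} for $M$: for every $p \in \mathcal{R}[x]$ there is a constant $c>0$ with $c - p \in M$ and $c + p \in M$. The clean way to see this is to show that the set of such ``$M$-bounded'' polynomials is a subalgebra of $\mathcal{R}[x]$; it trivially contains the constants, it contains each coordinate $x_i$ because $N - \sum_i x_i^2 \in M$ controls $x_i^2$ (and $2x_i = (x_i+1)^2 - x_i^2 - 1$ is then controlled), and closure under products follows from polarization identities such as $4ab = (a+b)^2 - (a-b)^2$ together with SOS bounds. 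Since this subalgebra contains the coordinates and the constants, it is all of $\mathcal{R}[x]$. As a byproduct $K$ is compact, since $N - \sum_i x_i^2 \in M$ forces $\sum_i x_i^2 \le N$ on $K$.

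Next I argue by contradiction: suppose $f > 0$ on $K$ but $f \notin M$. Because $M$ is a convex cone and $1$ is an order unit (so $M$ has nonempty algebraic core), the Eidelheit/Hahn--Banach separation theorem in $\mathcal{R}[x]$ yields a nonzero linear functional $L$ with $L \ge 0$ on $M$ and $L(f) \le 0$. The order-unit property forces $L(1) > 0$, so I normalize $L(1) = 1$. By construction $L$ is \emph{positive} in the moment sense: $L(\sigma) \ge 0$ for all $\sigma \in \Sigma[x]$, and $L(\sigma g_i) \ge 0$ for every SOS $\sigma$ and each $i$. The crux is then to show that any such Archimedean positive functional is represented by a Borel probability measure supported on $K$: form the GNS inner product $\langle p,q\rangle := L(pq)$ on $\mathcal{R}[x]$ modulo its kernel, realize each multiplication $p \mapsto x_i p$ as a symmetric operator $X_i$, and use the Archimedean bound $N - \sum_i x_i^2 \in M$ to make the commuting tuple $(X_1,\ldots,X_n)$ bounded and self-adjoint; the spectral theorem then produces a projection-valued measure and hence a representing measure $\mu$ with $L(p) = \int p\,d\mu$. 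The inequalities $L(\sigma g_i) \ge 0$ over all SOS $\sigma$ pin down $\operatorname{supp}\mu \subseteq K$. With this measure in hand the contradiction is immediate: $\mu$ is a probability measure on the compact set $K$ and $f > 0$ there, so $L(f) = \int_K f\,d\mu > 0$, contradicting $L(f) \le 0$. Hence no separating functional exists and $f \in M$, as claimed.

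The genuine obstacle is the representing-measure step in the third paragraph — converting the abstract positive functional $L$ into integration against a measure supported on $K$ (a version of Haviland's theorem / the $K$-moment problem). The order-unit lemma and the separation argument are soft; all of the analytic weight, and the only place the Archimedean hypothesis is truly indispensable, sits in the spectral-theoretic construction of $\mu$.
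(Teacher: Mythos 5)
The paper does not actually prove this theorem: it is recalled verbatim from the literature (cited as Theorem 2.14 of Lasserre's book), so there is no in-paper argument to compare against. Judged on its own terms, your outline is the classical functional-analytic proof due to Putinar, and it is sound. The three pillars are all in place and correctly ordered: (i) the Archimedean condition upgraded to the order-unit property via the subalgebra of bounded elements (the cleanest bookkeeping is to work with $H(M)=\{p:\exists n,\ n-p^2\in M\}$, using $4n-(a+b)^2=2(n-a^2)+2(n-b^2)+(a-b)^2$ and $n_1n_2-a^2b^2=n_2(n_1-a^2)+a^2(n_2-b^2)$, then $n+\tfrac14\pm p=(n-p^2)+(p\pm\tfrac12)^2$; your polarization remark is the same idea); (ii) Eidelheit separation, which is legitimate precisely because the order unit makes $1$ an algebraic interior point of the cone $M$, and your argument that $L(1)>0$ (else $L\equiv 0$) is the right one; (iii) the GNS/spectral-theorem construction of a representing measure supported on $K$, where the Archimedean bound gives $\lVert X_i\rVert\le\sqrt N$ and positivity of $L(\sigma g_i)$ forces $g_i\ge 0$ on the joint spectrum. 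You correctly flag that all the analytic weight sits in step (iii); as a sketch this is complete and faithful to the standard proof. Two cosmetic points: the conclusion of the theorem as printed, $f\in(\bar g)$, should read $f\in M(\bar g)$, and it is worth noting that more elementary, purely algebraic proofs (e.g.\ Schweighofer's via P\'olya's theorem) exist and avoid the spectral theorem entirely, though they are no shorter.
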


The subsequent proposition formalizes the problem of constrained positivity of polynomials which is a direct result of applying Positivstellensatz.
\begin{prop}[\cite{chesi2010lmi}] \label{chesip}
Let $\{a_i\}_{i=1}^k$ and $\{b_i\}_{i=1}^l$ belong to $\mathcal{P}$, then
\begin{eqnarray}
p(x) \ge 0 \quad &\forall x \in \mathbb{R}^n: a_i(x)=0, \, \forall i=1,2,...,k & \nonumber \\
& \text{and} \quad b_j(x) \ge 0, \, \forall j=1,2,...,l&
\end{eqnarray}
is satisfied, if the following holds
\begin{eqnarray} \label{chesieq}
&\exists r_1,r_2,\ldots,r_k \in \mathcal{R}[x] \quad \text{and} \quad \exists s_0,s_1,\ldots,s_l \in \Sigma[x] & \nonumber \\
&p = \sum_{i=1}^k r_i a_i +\sum_{i=1}^l s_i b_i +s_0&
\end{eqnarray}
\end{prop}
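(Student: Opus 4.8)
The plan is to establish only the \emph{sufficiency} (soundness) direction asserted by the proposition, namely that the existence of the certificate~\eqref{chesieq} guarantees constrained nonnegativity of $p$; this direction is elementary and reduces to a pointwise evaluation of the assumed polynomial identity. I would not attempt the converse, since the proposition makes no such claim.

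First I would fix an arbitrary point $x$ in the semi-algebraic set
$$
K = \{ x \in \mathbb{R}^n \mid a_i(x) = 0,~i=1,\ldots,k,~\text{and}~b_j(x) \ge 0,~j=1,\ldots,l \},
$$
and substitute it into the identity $p = \sum_{i=1}^k r_i a_i + \sum_{j=1}^l s_j b_j + s_0$ supplied by~\eqref{chesieq}. I would then bound the right-hand side term by term. The equality-constraint contribution $\sum_{i=1}^k r_i(x) a_i(x)$ vanishes, because $a_i(x)=0$ for every $i$; note this holds \emph{irrespective} of the sign of the multipliers $r_i(x) \in \mathcal{R}[x]$, which is precisely why equality constraints need not be paired with sum-of-squares multipliers. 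The inequality-constraint contribution $\sum_{j=1}^l s_j(x) b_j(x)$ is nonnegative, since each $s_j \in \Sigma[x]$ is a sum of squares and hence $s_j(x) \ge 0$, while $b_j(x) \ge 0$ by membership in $K$, so each summand is a product of two nonnegative reals. Finally $s_0(x) \ge 0$ because $s_0 \in \Sigma[x]$. Summing the three contributions gives $p(x) \ge 0$, and since $x \in K$ was arbitrary, $p$ is nonnegative on all of $K$, which is the asserted constrained positivity.

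I do not expect any real obstacle in this direction: the whole argument rests solely on the defining nonnegativity of sum-of-squares polynomials together with the constraint definitions on $K$. The genuinely hard content lies in the \emph{converse} representation statement --- that every polynomial positive on $K$ actually admits a certificate of the form~\eqref{chesieq} --- which is not claimed here and which is supplied by Putinar's Positivstellensatz (Theorem~\ref{thm:Psatz}) under the archimedean hypothesis. Accordingly, I would keep the proof to a few lines, stressing that the difficulty in applications is constructing the multipliers $r_i$ and $s_j$ (via the SOS/SDP relaxation), not verifying the implication itself.
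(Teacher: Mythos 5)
Your argument is correct and complete: the pointwise evaluation on the constraint set --- equality terms vanish regardless of the sign of the $r_i$, the products $s_j b_j$ and the term $s_0$ are nonnegative --- is exactly the standard soundness argument for this certificate, and you are right that only this easy direction is claimed (the converse being Putinar's Positivstellensatz under the archimedean hypothesis). The paper itself states Proposition~\ref{chesip} without proof, citing \cite{chesi2010lmi}, so your few-line derivation supplies precisely the omitted justification and there is nothing to reconcile.
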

\begin{prop} \label{spos}
The multivariable polynomial $p(x)$ is strictly positive ($p(x)>0 \quad \forall x \in \mathbb{R}^n$), if there exists a $\lambda > 0$ such that
\begin{equation}
\big( p(x) - \lambda \big) \in \Sigma[x].
\end{equation}
\end{prop}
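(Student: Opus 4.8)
The plan is to argue directly from the definition of the sum-of-squares cone $\Sigma[x]$ recalled at the start of this appendix, since the statement is a one-directional implication that requires no machinery beyond that definition. The hypothesis supplies a $\lambda > 0$ with $\big(p(x) - \lambda\big) \in \Sigma[x]$, so by the definition of an SOS polynomial there exist $p_i \in \mathcal{R}[x]$ with $p(x) - \lambda = \sum_i p_i^2(x)$.

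First I would record that each summand $p_i^2(x)$, being the square of a real-valued polynomial, is non-negative at every point $x \in \mathbb{R}^n$. A finite sum of non-negative quantities is non-negative, so $\sum_i p_i^2(x) \ge 0$ for all $x$, which is precisely the pointwise inequality $p(x) - \lambda \ge 0$, i.e.\ $p(x) \ge \lambda$ for every $x \in \mathbb{R}^n$.

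Combining this with the standing assumption $\lambda > 0$ then yields $p(x) \ge \lambda > 0$ for all $x \in \mathbb{R}^n$, which is exactly the asserted strict positivity of $p$. There is no genuine obstacle in this argument: the result is an immediate consequence of the non-negativity of sums of squares. The only subtlety worth flagging is that the \emph{strictness} of the conclusion is furnished entirely by the positive margin $\lambda$ and not by the SOS term itself, which on its own would deliver only $p(x) \ge 0$ rather than $p(x) > 0$; this is why the hypothesis insists on $\lambda > 0$ rather than merely $p(x) \in \Sigma[x]$.
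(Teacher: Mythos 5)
Your proof is correct. The paper states this proposition without giving any proof (it is treated as a standard fact), and your argument --- that $p(x)-\lambda=\sum_i p_i^2(x)\ge 0$ pointwise, hence $p(x)\ge\lambda>0$ for all $x\in\mathbb{R}^n$ --- is exactly the intended one-line justification, including the correct observation that the strictness comes from the margin $\lambda>0$ rather than from the SOS certificate itself.
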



\end{document}